\pgfplotsset{compat=newest}
\tikzstyle{vaucanson}=[
\theoremstyle{plain}    
\theoremstyle{plain}    
\theoremstyle{plain}    
\theoremstyle{plain}    
\theoremstyle{plain}    
\theoremstyle{plain}    
\theoremstyle{plain}    
\theoremstyle{plain}    
\theoremstyle{plain}    
\theoremstyle{plain}    
\theoremstyle{plain}    
\theoremstyle{plain}    
\theoremstyle{plain}    
\theoremstyle{plain}    
\theoremstyle{plain}    
\theoremstyle{plain}    
\theoremstyle{plain}
\newcommand{\eM}     {\mbox{$\epsilon$-machine}}
\newcommand{\eMs}    {\mbox{$\epsilon$-machines}}
\newcommand{\MeasAlphabet}  {\mathcal{A}}
\newcommand{\MeasSymbol}   { {X} }
\newcommand{\meassymbol}   { {x} }
\newcommand{\CausalState}   { \mathcal{S} }
\newcommand{\causalstate}   { \sigma }
\newcommand{\CausalStateSet}    { \bm{\CausalState} }
\newcommand{\AlternateState}    { \mathcal{R} }
\newcommand{\AlternateStateSet} { \bm{\AlternateState} }
\newcommand{\Prob}      {\Pr} 
\newcommand{\hmu}       {h_\mu}
\newcommand{\EE}        {{\bf E}}
\newcommand{\ProcessAlphabet}   {\MeasAlphabet}
\newcommand{\forward}{+}
\newcommand{\reverse}{-}
\newcommand{\forwardreverse}{\pm} 
\newcommand{\FutureCausalState} { {\CausalState}^{\forward} }
\newcommand{\PastCausalState}   { {\CausalState}^{\reverse} }
\newcommand{\lastindex}[2]{
  \edef\tempa{0}
  \edef\tempb{#2}
  \ifx\tempa\tempb
    \edef\tempc{#1}
  \else
    \edef\tempa{0}
    \edef\tempb{#1}
    \ifx\tempa\tempb
      \edef\tempc{#2}
    \else
      \edef\tempc{#1+#2}
    \fi
  \fi
  \tempc
}
\newcommand{\CSjoint}[1][,]{
   \edef\tempa{:}
   \edef\tempb{#1}
   \ifx\tempa\tempb
      \ensuremath{\FutureCausalState\!#1\PastCausalState}
   \else
      \ensuremath{\FutureCausalState#1\PastCausalState}
   \fi
}
\newif\ifpm
\edef\tempa{\forwardreverse}
\edef\tempb{\pm}
\colorlet {R_color}    {blue}
\colorlet {k_color}    {black!30!green}
\def\clap#1{\hbox to 0pt{\hss#1\hss}}
\begin{document}

\title{Inference, Prediction, and Entropy-Rate Estimation of\\
Continuous-time, Discrete-event Processes}

\author{Sarah E. Marzen}
\email{smarzen@cmc.edu}
\affiliation{W. M. Keck Science Department of Pitzer, Scripps, and Claremont McKenna College, Claremont, CA 91711}

\author{James P. Crutchfield}
\email{chaos@ucdavis.edu}
\affiliation{Complexity Sciences Center and Physics Department,
University of California at Davis, One Shields Avenue, Davis, CA 95616}

\date{\today}
\bibliographystyle{unsrt}

\begin{abstract}
Inferring models, predicting the future, and estimating the entropy rate of discrete-time, discrete-event processes is well-worn ground. However, a much broader class of discrete-event processes operates in continuous-time. Here, we provide new methods for inferring, predicting, and estimating them. The methods rely on an extension of Bayesian structural inference that takes advantage of neural network's universal approximation power. Based on experiments with complex synthetic data, the methods are competitive with the state-of-the-art
for prediction and entropy-rate estimation.
\end{abstract}

\keywords{Poisson process, renewal process, hidden semi-Markov process, hidden
Markov chain, $\epsilon$-machine, Shannon entropy rate, optimal predictor,
minimal predictor}

\pacs{
02.50.-r  
05.45.Tp  
02.50.Ey  
02.50.Ga  
}
\preprint{arxiv.org:2005.XXXXX [physics.gen-ph]}

\maketitle


\setstretch{1.1}

\newcommand{\Abet}{\ProcessAlphabet}
\newcommand{\MS}{\MeasSymbol}
\newcommand{\ms}{\meassymbol}
\newcommand{\SSet}{\CausalStateSet}
\newcommand{\St}{\CausalState}
\newcommand{\st}{\causalstate}
\newcommand{\MxSt}{\AlternateState}
\newcommand{\MxSSet}{\AlternateStateSet}
\newcommand{\mxst}{\mu}
\newcommand{\mxstt}[1]{\mu_{#1}}
\newcommand{\StartMS}{\bra{\delta_\pi}}
\newcommand{\Ipred}{\EE}
\newcommand{\ISI} { \xi }

\newcommand{\ECT}{\widehat{\EE}}
\newcommand{\CCT}{\widehat{C}_\mu}

\newcommand{\gen}{g}
\newcommand{\FeatAlphabet}{\mathcal{F}}


\vspace{0.2in}

\section{Introduction}

Much scientific data is dynamic: rather than a static image, we observe a
system's temporal evolution. The additional richness of dynamic data offers
improved understanding, but we may not know how to leverage the richer temporal
data to yield new insights into a system's behavior and structure.

For example, while there are extensive records of earthquake occurrence and magnitude, geophysics still cannot predict earthquakes well or estimate their intrinsic randomness \cite{geller1997earthquake}. Similarly, modern neurophysiology can identify which neurons spike when, but neuroscience still lacks a specification of the ``neural code'' that carries actionable information \cite{Riek99}. And, finally, we can observe many organisms in detail as they conduct their lives, but still are challenged to model their behavior \cite{berman2016predictability,cavagna2014dynamical}.

These natural processes operate not only in continuous-time, but over discrete
events---earthquake or not; neural spike or not; eating, sleeping, or roaming.
Their observations belong to a finite set and are not better-described as a
collection of real numbers. These disparate scientific problems and many others
beg for methods to infer expressive continuous-time, discrete-event models,
to predict behavior, and to estimate key system properties.

The following develops a unified framework that leverages the inferential and
predictive advantages of the \emph{unifilarity} of stochastic process models.
This property means that a model's underlying states---the \emph{causal states}
\cite{Shal98a} or \emph{predictive-states} \cite{littman2002predictive}---can
be uniquely identified from past data. We adapt the universal approximation
power of neural networks \cite{hornik1991approximation} to this setting to model
continuous-time, discrete-event processes. Said simply, the proposed
model-inference algorithm is the continuous-time extension of \emph{Bayesian
structural inference} \cite{PhysRevE.89.042119}.

Using the \emph{Bayesian information criterion} to balance model size against
estimation error \cite{bishop2006pattern}, we infer the most likely unifilar
hidden semi-Markov model (uhsMm) given data. This model class is more powerful
than (``nonhidden'') semi-Markov models (sMms) in the sense that uhsMms can
finitely represent continuous-time, discrete-event stochastic processes that
cannot be represented as finite sMms. Moreover, with sMms emitted event symbols
depend only on the prior symbol and their dwell times are drawn from an
exponential distribution. With uhsMms, in contrast, the probability of emitted
symbols depends on arbitrarily long pasts of prior symbols \emph{and} event
dwell times depend on general (nonexponential) distributions.

Beyond model inference, we apply the closed-form expressions of Ref.
\cite{marzen2017structure} to the inferred uhsMm to estimate a process'
entropy rate, removing statistical sampling approximations in this last step
and markedly improving accuracy. Moreover, we use the inferred uhsMm's causal
states to predict future events in a given time series via a $k$-nearest
neighbors algorithm. We compare the inference and prediction algorithms to
reasonable continuous-time, discrete-event adaptations of current
state-of-the-art algorithms. The new algorithms are competitive as long as
model inference is in-class, meaning that the true model producing the data is
equivalent to one of the models in our search.

Next, we review related work. Section \ref{sec:Background} then introduces
unifilar hidden semi-Markov models, while Sec. \ref{sec:Optimality} shows that
they are minimal sufficient statistics for prediction. Section
\ref{sec:Results} describes our new algorithms for model inference, entropy
rate estimation, and time series prediction. We then test them on complex
synthetic data---data from processes that are memoryful and exhibit long-range
statistical dependencies. Finally, Sec. \ref{sec:Discussion} discusses
extensions and future applications.

\section{Related work}
\label{sec:RelatedWork}

Many methods exist for analyzing discrete-time processes. The
\emph{autoregressive AR-$k$ procedure}, a classical technique, predicts a
symbol as a linear combination of previous symbols. A slight modification leads
to the \emph{generalized linear model} (GLM), in which the symbol probability
is proportional to the exponential of a linear combination of previous symbols
\cite{madsen2007time}. Previous approaches also use the \emph{Baum-Welch
algorithm} \cite{Rabi86a}, \emph{Bayesian structural inference}
\cite{PhysRevE.89.042119}, or a nonparametric extension of Bayesian structural
inference \cite{Pfau10a} to infer a hidden Markov model or probability
distribution over hidden Markov models of an observed process.  If the most
likely state of the hidden Markov model is correctly inferred, one can use the
model's structure (state and transition probabilities) to predict the future symbol.

More recently, recurrent neural networks and reservoir computers have been
trained to recreate the output of any dynamical system. This is implemented via
simple linear or logistic regression for reservoir computers
\cite{grigoryeva2018echo} or via back-propagation through time for recurrent
neural networks \cite{werbos1990backpropagation}.

Often continuous-time data can be profitably represented as discrete-time data
with a high sampling resolution. As such, one can essentially sample
continuous-time, discrete-event data at high frequency and use any of the
previously mentioned methods for predicting discrete-time data. Alternatively
and more directly, one can represent continuous-time, discrete-event data as a
list of continuous-valued \emph{dwell times} and discrete symbols.

When it comes to continuous-time, discrete-event predictors, much effort has
concentrated on continuous-time Markov processes with large state spaces
\cite{el2012continuous, nodelman2002continuous, yang2016learning}. In this,
system states are wholly visible, but there are relatively sparse observations.
As a result, we can impose structure on the kinetic rates (or intensity matrix)
to simplify inference. Others considered temporal point processes, equivalent
to the processes considered here. From them, the interevent interval
distribution's dependence on the history can be modeled parametrically
\cite{rodriguez2011uncovering} or using a recurrent neural network
\cite{du2016recurrent, mei2017neural, turkmenfastpoint,
mavroforakis2017modeling, karimi2016smart}. Though these are generative models,
in theory they can be converted into predictive models
\cite{marzen2017informational, marzen2017structure}. And yet others used
sequential Monte Carlo to make predictions from sampling distributions
determined by these models \cite{turkmenfastpoint, mavroforakis2017modeling}.

We take a new approach: Infer continuous-time hidden Markov models with a
particular (and advantageous) type of structure \cite{marzen2017structure}.
The models are designed to be a stochastic process' ``optimal predictor''
\cite{Shal98a, Trav11a} in that the model's hidden state can be
inferred almost surely from past data and in that the model's hidden states are
sufficient statistics---they provide the analyst with all the information
needed to best predict the future and, in fact, to calculate all other desired
process properties.

\section{Background}
\label{sec:Background}

We are given a sequence of symbols $x_i$ and durations $\tau_i$ of those
events: a time series of the form $\ldots,
(x_i,\tau_i),(x_{i+1},\tau_{i+1}),\ldots,(x_0,\tau_0^+)$. This list constitutes
the data $\mathcal{D}$. For example, animal behavioral data are of this kind: a
list of activities and durations. The last seen symbol $\ms_0$ has been seen
for a duration $\tau_0^+$. Had we observed the system for a longer amount of
time, $\tau_0^+$ may increase. The possible symbols belong to a finite set $x_i
\in \Abet$, while the interevent intervals $\tau_i \in (0,\infty)$. We assume
stationarity---the statistics of $\{(x_i,\tau_i)\}_{i \in \mathcal{I}}$ are
invariant to the \emph{start time}, where $\mathcal{I}$ is an interval of
contiguous times.

\begin{figure*}[t]
\centering
\includegraphics[width=0.45\textwidth]{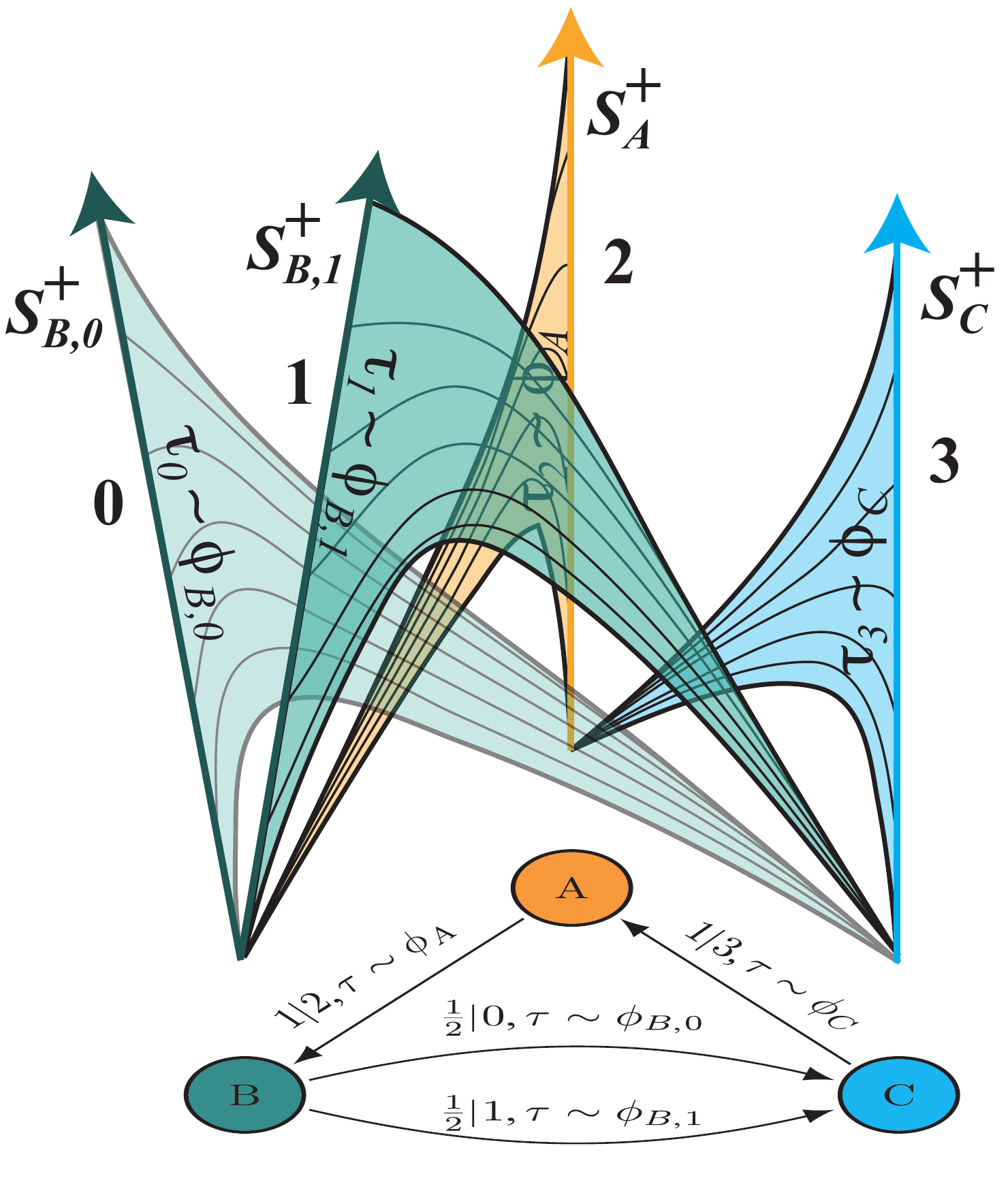}
\includegraphics[width=0.45\textwidth]{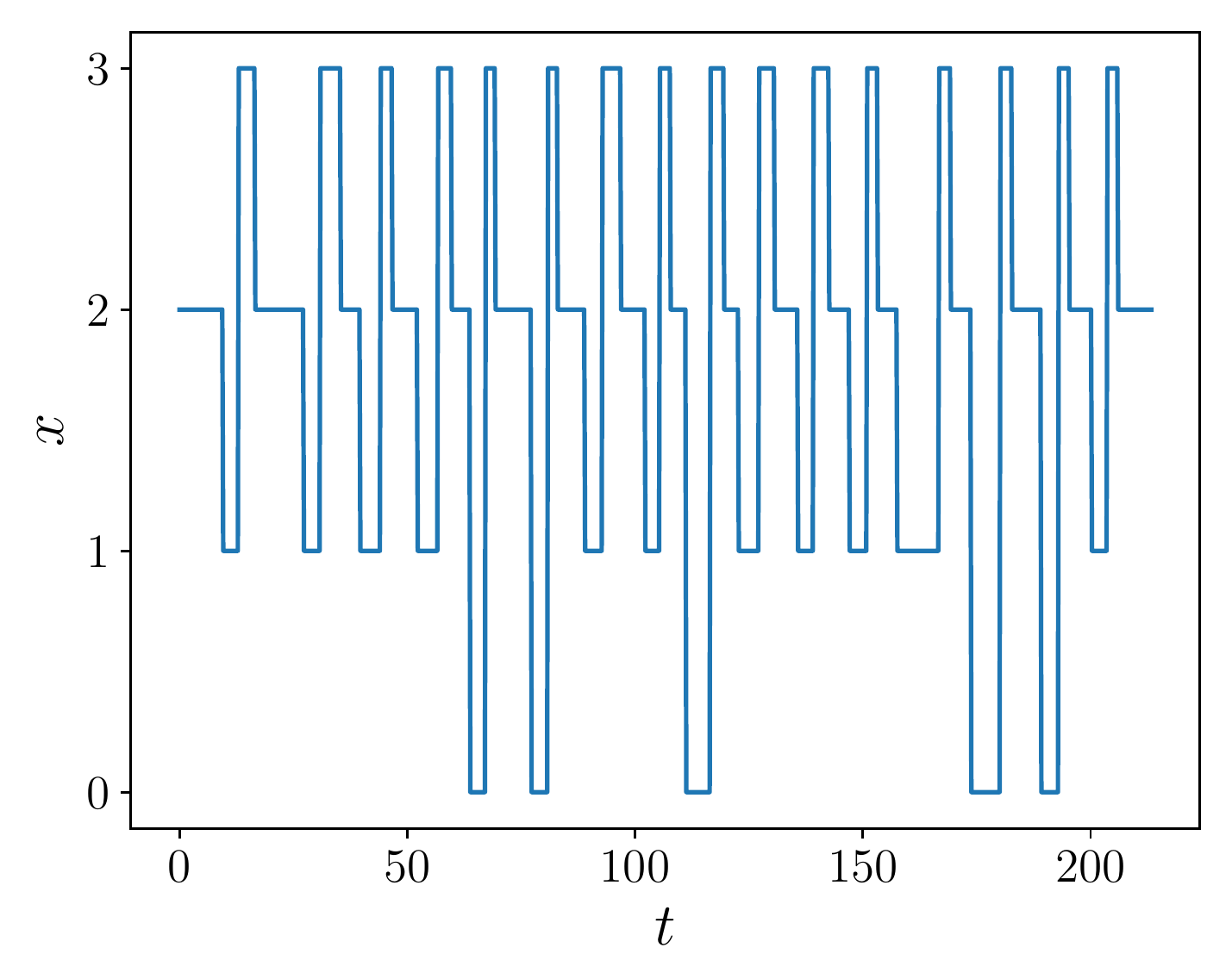}
\caption{\textbf{Unifilar hidden semi-Markov model (uhsMm):} At left, two
	presentations of an example. (Left bottom) Generative three-state
	$\{A,B,C\}$ model for a discrete-alphabet $\{0,1,2,3\}$, continuous-time
	stochastic process. Dwell times $\tau$ are drawn when transitioning between
	states, and the corresponding symbol is emitted for that amount of time.
	(Left top) Corresponding ``conveyor belt'' representation of the process
	generated by the model beneath. Conveyor belts represent the time since
	last symbol based on the height traveled along the conveyor belt; each
	conveyor belt has an event symbol. (Right) Example time series realization
	generated from the uhsMm, where $\phi_A$, $\phi_B$, and $\phi_C$ are inverse
	Gaussian distributions with $(\mu,~\lambda)$ pairs of $(1,2)$, $(2,3)$, and
	$(1,3)$, respectively.
	}
\label{fig:uhsmm}
\end{figure*}

Having specified the time series of interest, we turn to briefly introduce
their representations---unifilar hidden semi-Markov models. Denoted
$\mathcal{M}$, we consider them as \emph{generating} such time series
\cite{marzen2017structure}. The minimal such model consistent with the
observations is the \emph{\eM}. Underlying a unifilar hidden semi-Markov model
is a finite-state machine with states $g$, each equipped with a dwell-time
distribution $\phi_g(\tau)$, an emission probability $p(\ms|g)$, and a function
$\epsilon^+(g,\ms)$ that specifies the next hidden state when given the current
hidden state $g$ and the current emission symbol $\ms$.

This model generates a time series as follows: a hidden state $g$ is randomly
chosen; a dwell time $\tau$ is chosen according to the dwell-time distribution
$\phi_g(\tau)$; an emission symbol $\ms$ is chosen according to the conditional
probability $p(\ms|g)$; and we then emit the chosen $\ms$ for duration
$\tau$. A new hidden state is determined via $\epsilon^+(g,\ms)$, and we
further restrict possible next emissions to be different than the previous
emission---a property that makes this model \emph{unifilar}---and the procedure
repeats. See Fig.  \ref{fig:uhsmm} for illustrations of a unifilar hidden
semi-Markov model that is an \eM\ with three hidden states $\{A,B,C\}$ which
emits four events $\{0,1,2,3\}$ with probabilistically varying durations.

\section{Optimality}
\label{sec:Optimality}

We introduce a theorem that elucidates the representational power of the
unifilar hidden semi-Markov models (\eMs) discussed here that closely follows
the proofs in Refs. \cite{Shal98a, Trav11a}. Let $\overleftarrow{Y}$
represent the random variable for semi-infinite pasts and $\overleftarrow{y}$
its realization, and let $\overrightarrow{Y}$ represent the random variable for
semi-infinite futures and $\overrightarrow{y}$ its realization. As described in
Sec. \ref{sec:Background}, $\overleftarrow{y}$ is a list of past dwell times
and past emitted symbols, ending with the present symbol and the time since
last symbol. And, $\overrightarrow{y}$ is a list of future dwell times and
future emitted symbols, starting with the present symbol and time to next
symbol.

First, we define \emph{causal states} as follows. Consider an equivalence
relation on pasts: two pasts are considered equivalent,
$\overleftarrow{y}\sim_{\epsilon}\overleftarrow{y}'$, if the conditional
probability distributions over futures given the past are equivalent:
$P(\overrightarrow{Y}|\overleftarrow{Y}=\overleftarrow{y}) =
P(\overrightarrow{Y}|\overleftarrow{Y}=\overleftarrow{y}')$. This equivalence
relation partitions the set of pasts into causal states with associated random
variable $\mathcal{S}$ and realization $\sigma$, such that $\sigma =
\epsilon(\overleftarrow{y})$ is the causal state $\sigma$ containing the past
$\overleftarrow{y}$.

\newtheorem{theorem}{Theorem}
\begin{theorem}
The causal states of a process generated by a hidden semi-Markov model are minimal sufficient statistics of prediction.
\end{theorem}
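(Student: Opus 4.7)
The plan is to adapt the classical two-step argument from Refs.~\cite{Shal98a, Trav11a} to the continuous-time, discrete-event setting. First I would establish \emph{sufficiency}. By the very definition of the causal equivalence relation $\sim_\epsilon$, any two pasts assigned to the same causal state induce the same conditional distribution over futures. Consequently,
\[
P\!\left(\overrightarrow{Y} \mid \overleftarrow{Y} = \overleftarrow{y}\right)
= P\!\left(\overrightarrow{Y} \mid \mathcal{S} = \epsilon(\overleftarrow{y})\right)
\]
almost surely, which is exactly the condition for $\mathcal{S}$ to be a sufficient statistic of $\overleftarrow{Y}$ for $\overrightarrow{Y}$; equivalently, $I[\overrightarrow{Y};\mathcal{S}] = I[\overrightarrow{Y};\overleftarrow{Y}]$.

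Next I would establish \emph{minimality}. Let $R = \eta(\overleftarrow{Y})$ be any competing sufficient statistic, so that $P(\overrightarrow{Y} \mid R) = P(\overrightarrow{Y} \mid \overleftarrow{Y})$ almost surely. The key observation is that if two pasts $\overleftarrow{y}, \overleftarrow{y}'$ satisfy $\eta(\overleftarrow{y}) = \eta(\overleftarrow{y}')$, then they must share the same conditional distribution over futures, and so by the defining property of $\sim_\epsilon$ they lie in the same causal state. Hence $\mathcal{S}$ is almost surely a deterministic function of $R$, which gives $H[\mathcal{S} \mid R] = 0$ and therefore $H[\mathcal{S}] \leq H[R]$---the minimality claim.

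The main technical obstacle will be justifying these steps in the continuous-time setting, where $\overleftarrow{Y}$ and $\overrightarrow{Y}$ take values in a mixed discrete-continuous space (sequences of symbols paired with strictly positive real dwell times, together with the partially observed current dwell $\tau_0^+$). I would invoke the measure-theoretic formulation of sufficiency based on regular conditional distributions, interpret the entropies above as generalized/differential entropies where appropriate, and check that the $\sim_\epsilon$-equivalence classes are measurable so that $\epsilon(\cdot)$ is a genuine statistic. Stationarity, assumed in Sec.~\ref{sec:Background}, is what makes $P(\overrightarrow{Y} \mid \overleftarrow{Y})$ well-defined independently of the observation window, and that is all the structure the two steps actually require---so no appeal to unifilarity or to the specific semi-Markov form is needed for this theorem itself, though both will of course be essential afterward when identifying $\mathcal{S}$ with the model's hidden states $(g, \ms_0, \tau_0^+)$.
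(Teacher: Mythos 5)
Your proof is correct, and the sufficiency half is identical to the paper's. Where you diverge is in how you reach the key intermediate fact $H[\mathcal{S}\mid\mathcal{R}]=0$ for a competing sufficient statistic $\mathcal{R}$. You argue by \emph{refinement}: taking sufficiency to mean $P(\overrightarrow{Y}\mid\mathcal{R})=P(\overrightarrow{Y}\mid\overleftarrow{Y})$ a.s., any two pasts with the same $\mathcal{R}$-value share a future conditional, hence lie in the same $\sim_\epsilon$-class, so $\mathcal{S}$ is a.s.\ a function of $\mathcal{R}$. This is the classic Shalizi--Crutchfield refinement lemma, and it is arguably cleaner and more elementary. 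The paper instead works from the mutual-information characterization of sufficiency: it expands $P(\overrightarrow{Y}^T\mid\mathcal{R}=r)$ as a mixture over causal states, invokes concavity of conditional entropy to get $H[\overrightarrow{Y}^T\mid\mathcal{R}]\ge H[\overrightarrow{Y}^T\mid\mathcal{S}]$ with equality only when $P(\mathcal{S}\mid\mathcal{R}=r)$ is supported on a single causal state, and concludes $H[\mathcal{S}\mid\mathcal{R}]=0$ from $I[\overrightarrow{Y}^T;\mathcal{R}]=I[\overrightarrow{Y}^T;\mathcal{S}]$. The two routes correspond to starting from the two (equivalent, via the equality case of the data-processing inequality) definitions of sufficiency; the paper's version buys a statement that goes through when sufficiency is only given information-theoretically, while yours avoids the concavity/equality-case machinery entirely. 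Both then finish with the same chain-rule step $H[\mathcal{S}]\le H[\mathcal{R}]$. Your closing remarks on measurability of the $\sim_\epsilon$-classes and on regular conditional distributions address the same mixed discrete-continuous subtlety the paper handles by its infinitesimal-partition limit, and your observation that neither unifilarity nor the semi-Markov form is needed here matches the paper, which reserves those hypotheses for Theorem 2.
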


\begin{proof}
As the process is generated by a hidden semi-Markov model, we can meaningfully
discuss the conditional probability distribution of futures given pasts.  From
the definition of the equivalence relation, we have that
$P(\overrightarrow{Y}|\overleftarrow{Y}=\overleftarrow{y}) =
P(\overrightarrow{Y}|\mathcal{S}=\epsilon(\overleftarrow{y}))$.  Let
$\overrightarrow{Y}^T$ denote futures of total duration $T$. It follows from
$P(\overrightarrow{Y}|\overleftarrow{Y}=\overleftarrow{y}) =
P(\overrightarrow{Y}|\mathcal{S}=\epsilon(\overleftarrow{y}))$ that
$H[\overrightarrow{Y}^T|\overleftarrow{Y}] =
H[\overrightarrow{Y}^T|\mathcal{S}]$ for all $T$, from which it follows that
$I[\overrightarrow{Y}^T;\overleftarrow{Y}] =
I[\overrightarrow{Y}^T;\mathcal{S}]$. ($H[\cdot]$, $H[\cdot|\cdot]$, and
$I[\cdot;\cdot]$ are respectively the entropy, conditional entropy, and mutual
information \cite{Cove06a}.) Hence, causal states $\mathcal{S}$ are sufficient
statistics of prediction.

We then turn to the minimality of causal states. Since $\mathcal{S}$ is a
sufficient statistic of prediction, the Markov chain $\overrightarrow{Y}
\rightarrow \mathcal{S} \rightarrow \overleftarrow{Y}$ holds. Consider any
other sufficient statistic $\mathcal{R}$ of prediction. We are guaranteed the
Markov chain $\overrightarrow{Y} \rightarrow \mathcal{S} \rightarrow
\mathcal{R}$. Consider $P(\mathcal{S}=\sigma|\mathcal{R}=r)$ and futures of
length $T$. Note that:
\begin{align*}
P(\overrightarrow{Y}^T|\mathcal{R}\!=\!r) \!=\!\! \sum_{\sigma}
\!P(\mathcal{S}\!=\!\sigma|\mathcal{R}\!=\!r)
P(\overrightarrow{Y}^T|\mathcal{S}\!=\!\sigma)
  ~.
\end{align*}
From the convexity of conditional entropy, we have that:
\begin{align*}
H[\overrightarrow{Y}^T | \mathcal{R}\!=\!r] \!\geq\! \sum_{\sigma}
P(\mathcal{S}\!=\!\sigma|\mathcal{R}\!=\!r)
H[\overrightarrow{Y}^T|\mathcal{S}\!=\!\sigma]
  ~,
\end{align*}
with equality if $P(\mathcal{S}=\sigma|\mathcal{R}=r)$ has support on one
causal state $\sigma$.  From the above inequality, we find that:
\begin{align*}
\sum_{r} P(\mathcal{R}\!=\!r) H[\overrightarrow{Y}^T | & \mathcal{R}\!=\!r] \\
  &\!\geq\! \sum_{r,\sigma} \!P(\mathcal{R}\!=\!r,\mathcal{S}\!=\!\sigma)
  H[\overrightarrow{Y}^T|\mathcal{S}\!=\!\sigma]
  ~.
\end{align*}
And so:
\begin{align*}
H[\overrightarrow{Y}^T|\mathcal{R}]
  &\geq H[\overrightarrow{Y}^T|\mathcal{S}]
\end{align*}
and:
\begin{align*}
I[\overrightarrow{Y}^T;\mathcal{R}] \leq I[\overrightarrow{Y}^T;\mathcal{S}]
  ~,
\end{align*}
for any length $T$. This implies $I[\overrightarrow{Y};\mathcal{R}] \leq
I[\overrightarrow{Y};\mathcal{S}]$. If $\mathcal{R}$ is a sufficient
statistic, then equality holds; hence, $P(\mathcal{S}|\mathcal{R}=r)$ has
support on only one causal state, and hence, $H[\mathcal{S}|\mathcal{R}]=0$.

A subtlety here is that $\mathcal{S}$ is a mixed discrete-continuous random
variable and so, for the moment, we consider infinitesimal partitions of the
aspect of $\mathcal{S}$ that tracks the time since last event, and then take
the limit as the partition size tends to $0$, as is often done in calculations
of entropy rate; see, e.g., Ref. \cite{Gasp93a}. From considering
$H[\mathcal{S},\mathcal{R}]$, we find:
\begin{align*}
H[\mathcal{S}] + H[\mathcal{R}|\mathcal{S}] &= H[\mathcal{R}] + H[\mathcal{S}|\mathcal{R}] \\
H[\mathcal{S}] + H[\mathcal{R}|\mathcal{S}] &= H[\mathcal{R}] \\
H[\mathcal{S}] &\leq H[\mathcal{R}]
  ~,
\end{align*}
where we used the fact that $H[\mathcal{R}|\mathcal{S}]\geq 0$. We therefore
established that if $\mathcal{R}$ is a minimal sufficient statistic of
prediction, it must be equivalent to the causal states $\mathcal{S}$.
\end{proof}

In what follows, we relate causal states to the hidden states of \emph{minimal unifilar} hidden semi-Markov models.

\begin{theorem}
\label{thm:CausalStates}
The hidden states of the minimal unifilar hidden semi-Markov model---i.e.,
$g$, $x$, and $\tau$---are causal states.
\end{theorem}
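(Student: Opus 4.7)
The plan is to verify both directions of the causal-state definition: (i) every past $\overleftarrow{y}$ determines a triple $(g,x,\tau)$, and pasts with identical triples are predictively equivalent; (ii) distinct triples yield distinct predictive distributions. Together these make $(g,x,\tau)$ a minimal sufficient statistic, which by Theorem 1 coincides with the causal states.

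First I would establish that $(g,x,\tau)$ is a measurable function of $\overleftarrow{Y}$. The present symbol $x$ and elapsed time $\tau$ are directly read off $\overleftarrow{y}$. For the hidden state $g$, I invoke unifilarity: iterating the transition function $\epsilon^+(\cdot,\cdot)$ along the observed emission sequence from any initial guess synchronizes to the true state almost surely after finitely many transitions, so $g$ is almost-surely determined by $\overleftarrow{y}$.

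Second, I would prove sufficiency by unwinding the generative mechanism. Conditioned on $(g,x,\tau)$, the future is produced by (a) a residual dwell time with density proportional to $\phi_g(\tau+t)$, (b) a deterministic transition to $\epsilon^+(g,x)$, and (c) iterative continuation of the uhsMm dynamics, which depend only on the current hidden state and emission. No additional past information enters, so $P(\overrightarrow{Y}\mid \overleftarrow{Y}=\overleftarrow{y}) = P(\overrightarrow{Y}\mid g,x,\tau)$. Combined with step one, this shows that each causal state is a union of $(g,x,\tau)$ triples; i.e., the triples refine the causal-state partition.

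Third, I would use minimality of the uhsMm to obtain the reverse inclusion. If two triples $(g,x,\tau)\neq(g',x',\tau')$ induced the same conditional future distribution, then identifying them in the presentation would produce a strictly smaller uhsMm generating the same process, contradicting minimality. Hence the map from triples to causal states is injective, and combined with step two it is a bijection.

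The principal obstacle is this third step when $\tau$ is continuous: minimality of the discrete skeleton $\{g\}$ is a clean algebraic statement, but ruling out coincidences of the form $P(\overrightarrow{Y}\mid g,x,\tau)=P(\overrightarrow{Y}\mid g,x,\tau')$ for $\tau\neq\tau'$ requires $\phi_g$ to be non-degenerate (not memoryless---otherwise $\tau$ ought to have been dropped from the state in the first place). I would fold this into the notion of minimality, declaring that whenever $\phi_g$ has constant hazard, $\tau$ is removed from the state. Under this convention, injectivity follows because the residual-lifetime density $\phi_g(\tau+\cdot)/\int_\tau^\infty\phi_g(s)\,ds$ is a strictly $\tau$-dependent quantity whenever $\phi_g$ is not exponential, and this dependence propagates through $\epsilon^+(g,x)$ to the entire future distribution.
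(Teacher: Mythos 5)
Your proof follows the same architecture as the paper's: unifilarity makes the hidden state an almost-sure function of the past, the generative structure makes it a sufficient statistic, and minimality plus the uniqueness of minimal sufficient statistics (Theorem 1) identifies it with the causal states. The paper executes the middle step information-theoretically---asserting the two Markov chains $\overleftarrow{Y}\rightarrow\mathcal{G}\rightarrow\overrightarrow{Y}$ and $\mathcal{G}\rightarrow\overleftarrow{Y}\rightarrow\overrightarrow{Y}$ and applying the Data Processing Inequality twice to get $I[\overrightarrow{Y};\overleftarrow{Y}]=I[\overrightarrow{Y};\mathcal{G}]$---whereas you compute the conditional future distribution directly from the residual-dwell-time density $\phi_g(\tau+\cdot)$ and the transition map $\epsilon^+$; these are equivalent, and your version actually justifies the forward Markov chain rather than asserting it. You are also more explicit than the paper about the last step: the paper simply declares that minimality makes $\mathcal{G}$ the \emph{minimal} sufficient statistic (deferring details to the cited reference), while you supply the state-merging contradiction and, importantly, notice that minimality of the discrete skeleton does not by itself control the continuous coordinate $\tau$.

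One caution on that final point: your claim that the residual-lifetime density depends injectively on $\tau$ ``whenever $\phi_g$ is not exponential'' is slightly too strong. Two elapsed times $\tau\neq\tau'$ yield the same residual distribution exactly when the hazard function of $\phi_g$ satisfies $h(\tau+t)=h(\tau'+t)$ for all $t\ge 0$, which a non-constant (e.g., eventually periodic) hazard can arrange; constant hazard is only the extreme case. So the convention you propose---collapsing $\tau$ out of the state when the dwell distribution is degenerate---needs to exclude all such coincidences, not just the memoryless one. This is a genuine edge case, but the paper's own proof does not address it either (it is absorbed into ``minimality'' and the citation to the reference), so it does not undermine your argument relative to the paper's.
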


\begin{proof}
Since a detailed proof is given in Ref. \cite{Trav11a}, we state
the issues somewhat informally. A minimal unifilar hidden semi-Markov model has
two key properties:
\begin{itemize}
\item \emph{Unifilarity}: if the current hidden state and next emission are
	known, then the next hidden state is determined; and
\item \emph{Minimality}: minimal number of states (or generative complexity
	\cite{lohrthesis}) out of all unifilar generators consistent with the
	observed process.
\end{itemize}
Let $\mathcal{G}$ be the random variable denoting the hidden state.  Clearly
$\overleftarrow{Y} \rightarrow \mathcal{G} \rightarrow \overrightarrow{Y}$ for
any hidden Markov model.  The unifilarity of the model guarantees that we can
almost surely determine the hidden state of the model given the past and,
hence, $\mathcal{G} \rightarrow \overleftarrow{Y} \rightarrow
\overrightarrow{Y}$.  The Data Processing Inequality applied twice implies that
$I[\overrightarrow{Y};\overleftarrow{Y}] = I[\overrightarrow{Y};\mathcal{G}]$,
and so the hidden state is a sufficient statistic of prediction.  As we are
focusing on the \emph{minimal} unifilar model, $\mathcal{G}$ is the minimal
sufficient statistic of prediction, and so there is an isomorphism between the
machine constructed from $\mathcal{S}$ and the minimal unifilar machine.
\end{proof}

Theorem \ref{thm:CausalStates} provides the inspiration for the algorithms that
follow.

\section{CT-BSI and Comparison Algorithms}
\label{sec:Results}

We investigate and then provide algorithms for three tasks: model inference,
calculating the differential entropy rate, and predicting future symbols. Our
main claim is that restricting attention to a special type of discrete-event,
continuous-time model---the unifilar hidden semi-Markov models or \eM---renders
all three tasks markedly easier since the model's hidden states are minimal
sufficient statistics of prediction, based on Thm. \ref{thm:CausalStates}. The
restriction is, in fact, not much of one, as the \eMs\ can finitely represent
an exponentially larger set of processes compared to those generated by Markov
and semi-Markov models.

\subsection{Inferring Optimal Models of Unifilar Hidden Semi-Markov Processes}

The unifilar hidden semi-Markov models described earlier can be parameterized.
Let $\mathcal{M}$ refer to a model---in this case, the underlying topology of
the finite-state machine and neural networks defining the density of dwell
times. Let $\theta$ refer to the model's parameters; i.e., the emission
probabilities and the parameters of the neural networks. And, let $\mathcal{D}$
refer to the data; i.e., the list of emitted symbols and dwell times. Ideally,
to choose a model we maximize the posterior distribution by calculating
$\arg\max_{\mathcal{M}}\Prob(\mathcal{M}|\mathcal{D})$ and select parameters of
that model via maximum likelihood:
$\arg\max_{\theta}\Prob(\mathcal{D}|\theta,\mathcal{M})$.

In the case of discrete-time unifilar hidden Markov models, Strelioff and
Crutchfield \cite{PhysRevE.89.042119} described the Bayesian framework for
inferring the best-fit model and parameters. More than that, Ref.
\cite{PhysRevE.89.042119} calculated the posterior analytically, using the
unifilarity property to ease the mathematical and statistical burdens. Analytic
calculations in continuous-time may be possible, but we leave that for a future
endeavor. We instead turn to a variety of approximations, still aided by the
unifilarity of the inferred models.

The main such approximation is our use of the Bayesian inference criterion
(BIC) \cite{bishop2006pattern}. Maximum a posteriori model selection is
performed via:
\begin{align}
\label{eq:BIC}
\text{BIC} &= \frac{k_{\mathcal{M}}}{2} \log \left|\mathcal{D}\right|- \max_{\theta} \log \Prob(\mathcal{D}|\theta,\mathcal{M}) \\
\mathcal{M}^* &= \arg\min_{\mathcal{M}} \text{BIC}
  \nonumber
  ~,
\end{align}
where $k_{\mathcal{M}}$ is the number of parameters $\theta$. To choose a
model, then, we must calculate not only the parameters $\theta$ that maximize
the log likelihood, but the log likelihood itself.

We make one further approximation for tractability involving the uhsMm start
state $s_0$, for which:
\begin{align*}
\Prob(\mathcal{D}|\theta,\mathcal{M}) = \sum_{s_0}
  \pi(s_0|\theta,\mathcal{M}) \Prob(\mathcal{D}|s_0,\theta,\mathcal{M})
  ~.
\end{align*}
Since the logarithm of a sum has no simple expression, we approximate:
\begin{align*}
\max_{\theta} \log \Prob(\mathcal{D}|\theta,\mathcal{M})
  \approx \max_{s_0} \max_{\theta}
  \log \Prob(\mathcal{D}|s_0,\theta,\mathcal{M})
  ~.
\end{align*}
If it is possible to infer the start state from the data---which is the case
for all the models considered here---then the likelihood should overwhelm the
prior's influence. Our strategy, then, is to choose parameters $\theta$ that
maximize $\max_{s_0} \log \Prob(\mathcal{D}|s_0,\theta,\mathcal{M})$ and to
choose the model $\mathcal{M}$ that minimizes the BIC in Eq. (\ref{eq:BIC}).
This constitutes inferring a model that explains the observed data and
minimizes generalization error.

What remains to be done, therefore, is approximating $\max_{s_0} \max_{\theta}
\log \Prob(\mathcal{D}|s_0,\theta,\mathcal{M})$.  The parameters $\theta$ of
any given model include $p(s',x|s)$, the probability of emitting $x$ when in
state $s$ and transitioning to state $s'$, and $\phi_{s}(t)$, the interevent
interval distribution of state $s$.  Using the unifilarity of the underlying
model, the sequence of $x$'s when combined with the start state $s_0$ translate
into a single possible sequence of hidden states $s_i$.  As such, one can show
that:
\begin{align}
\log \Prob(\mathcal{D}|s_0,\theta,\mathcal{M})
  & = \sum_s \sum_{j} \log \phi_{s}(\tau^{(s)}_j) \nonumber \\ 
  &\quad + \sum_{s,x,s'} n(s',x|s)\log p(s',x|s)
  ~,
\label{eq:1}
\end{align}
where $n(s',x|s)$ is the number of times we observe an emission $x$ from a
state $s$ leading to state $s'$ and where $\tau^{(s)}_j$ is any interevent
interval produced when in state $s$. It is relatively easy to analytically
maximize with respect to $p(s',x|s)$, including the constraint that
$\sum_{s',x} p(s',x|s) = 1$ for any $s$. We find that:
\begin{align}
p^*(s',x|s) = \frac{n(s',x|s)}{n(s)}
  ~,
\end{align}
where $n(s)$ is the number of times the model visits state $s$.

Now, we turn to approximate the dwell-time distributions $\phi_s(t)$. In
theory, a dwell-time distribution can be any normalized nonnegative function.
Inference may even seem impossible. However, with sufficient nodes artificial
neural networks can represent any continuous function. We therefore represent
$\phi_s(t)$ by a relatively shallow (here, three-layer) artificial neural
network in which nonnegativity and normalization are enforced as follows:
\begin{itemize}
\item The second-to-last layer's activation functions are ReLus ($\max(0,x)$
	and so have nonnegative output) and the weights to the last layer are
	constrained to be nonnegative; and
\item The output is the last layer's output divided by a numerical integration
	of the last layer's output.
\end{itemize}

The log likelihood $\sum_{j} \log \phi_{s}(\tau^{(s)}_j)$ determines the cost
function for the neural network. Then, the neural network can be trained using
typical stochastic optimization methods. (Here, we use Adam
\cite{kingma2014adam}.) The neural network output can successfully estimate
the interevent interval density function, given sufficient samples, within the
interval for which there is data. See Fig. \ref{fig:twoIGs}. Outside this
interval, however, the estimated density function is not guaranteed to vanish
as $t\rightarrow\infty$, and it can even grow. Stated differently, the neural
networks considered here are good interpolators, but can be bad extrapolators.
As such, the density function estimated by the network is taken to be $0$
outside the interval over which there is data.

\begin{figure*}
\centering
\includegraphics[width=0.45\textwidth]{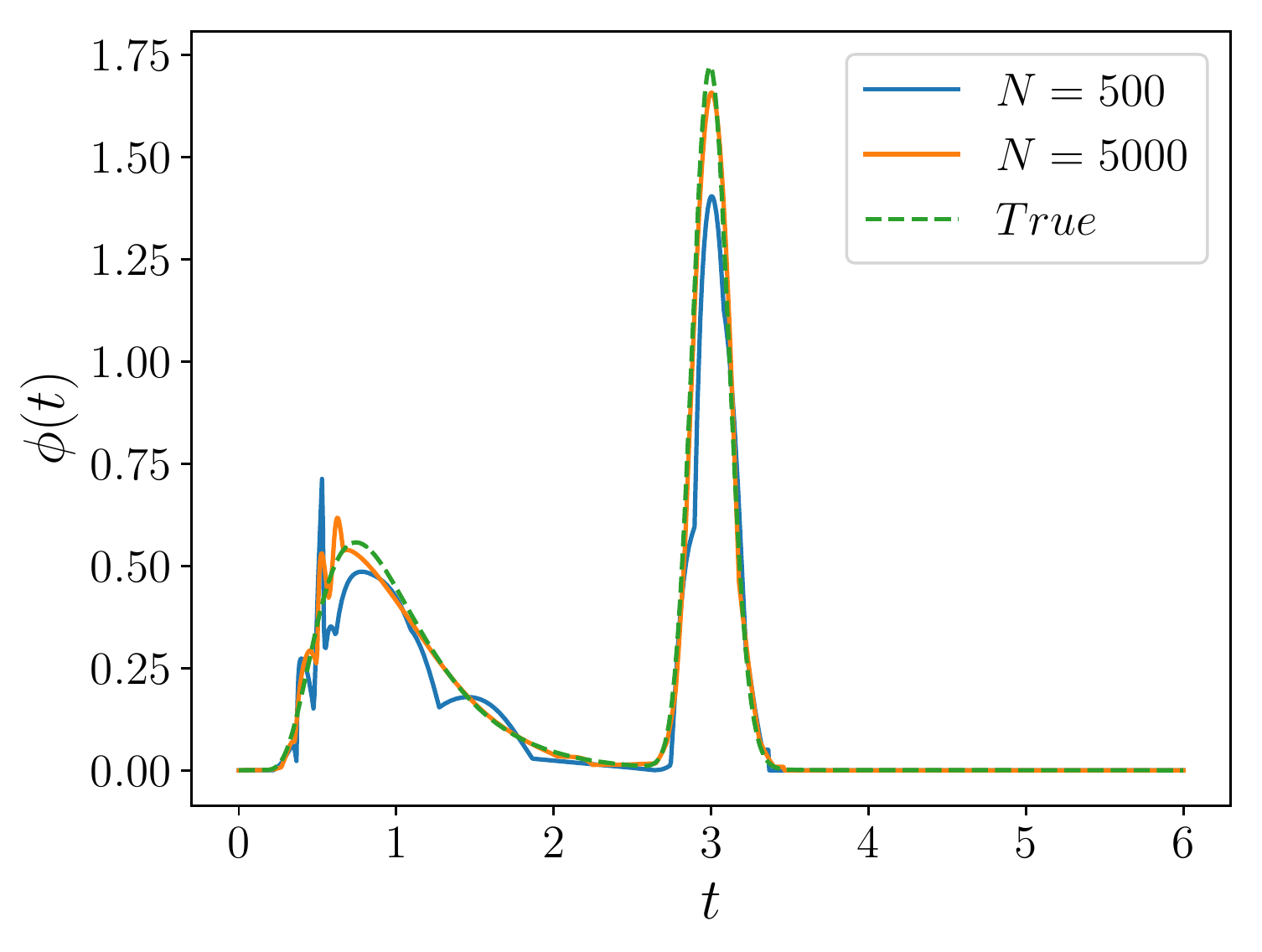}
\includegraphics[width=0.45\textwidth]{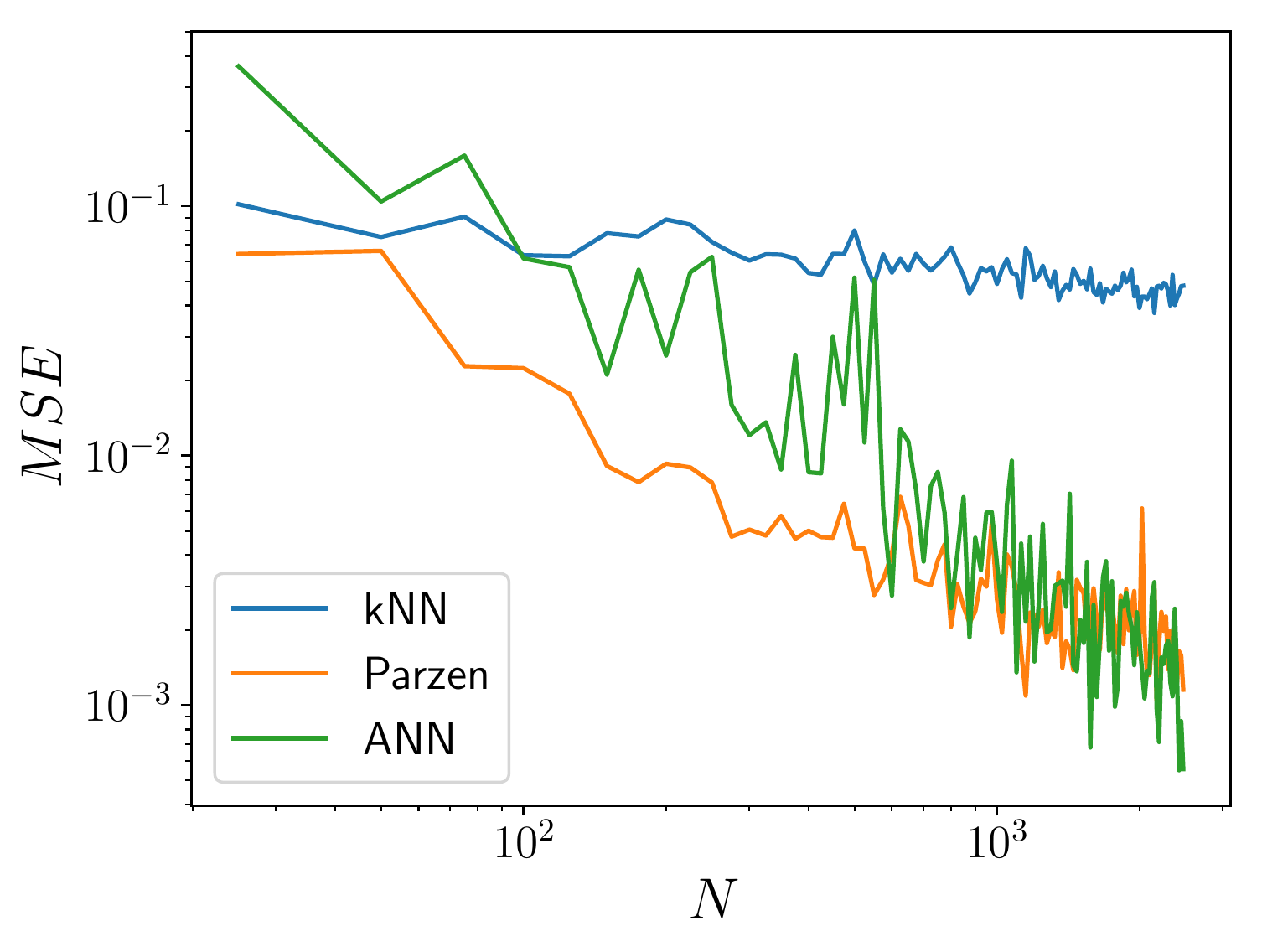}
\caption{\textbf{Estimated dwell-time density function for varying numbers of
	samples.} (Left) Inferred density function using the neural network
	described here compared to the true density function (dotted, green) when
	given $500$ samples (blue) and $5000$ samples (orange). As the sample size
	increases, the inferred density function better approximates ground truth.
	An interevent interval distribution with two modes was arbitrarily chosen
	by setting $\phi(\tau)$ to a mixture of two inverse Gaussians. (Right)
	Mean-squared error between the estimated density and the true density as we
	use more training data for three different estimation techniques. The green
	line denotes the ANN algorithm introduced here, in which we learn densities
	from a neural network, running with five different seeds and choosing the
	one with the lowest MSE; the blue line denotes the $k$-nearest neighbors
	algorithm \cite{bishop2006pattern, fukunaga1973optimization}; and the
	orange line gives Parzen-window estimates \cite{bishop2006pattern,
	marron1987comparison}. Our new method is competitive with these two standard
	methods for density estimation and quantitatively equivalent to the Parzen
	estimator at moderate to large samples.
	}
\label{fig:twoIGs}
\end{figure*}

To the best of our knowledge, this is a new approach to density estimation,
referred to as \emph{ANN} here. A previous approach to density estimation using
neural networks learned the cumulative distribution function
\cite{magdon1999neural}. Another more popular approach expresses the
interevent interval as $\lambda(t) e^{-\int^t \lambda(s) ds}$, where
$\lambda(t)$ is the intensity function. Analysts then either parameterize the
intensity function or use a recurrent neural network \cite{du2016recurrent,
mei2017neural, turkmenfastpoint, mavroforakis2017modeling} to model
$\lambda(t)$. Note that the log-likelihood for this latter approach also
involves numerical integration, but this time, of the intensity function. This
integral accounts for the probability of nonevents. Some assume a particular
form for the interevent interval and fit parameters of the functional form to
data \cite{rodriguez2011uncovering}. More traditional approaches to density
estimation include $k$-nearest neighbor estimation techniques and Parzen-window
estimates, both of which need careful tuning of hyperparameters ($k$ or $h$)
\cite{bishop2006pattern}. They are referred to here as \emph{kNN} and
\emph{Parzen}, respectively.

We compare ANN, kNN, and Parzen approaches to inferring an interevent interval
density function that we have chosen, arbitrarily, to be the mixture of inverse
Gaussians shown in Fig. \ref{fig:twoIGs} (Left). The $k$ in $k$-nearest
neighbor estimation is chosen according to Ref.
\cite{fukunaga1973optimization}'s criterion and $h$ is chosen to maximize the
pseudo-likelihood \cite{marron1987comparison}. Note that, as Fig.
\ref{fig:twoIGs} (Right) shows, this is not a superior approach to density
estimation in terms of minimization of mean-squared error, but it is
parametric, so that BIC model selection can be used.

The approach taken here is certainly not the only promising approach one can
invent. Future work will investigate both the efficacy of parametrizing the
intensity function rather than the interevent interval density function
\cite{du2016recurrent, mei2017neural, turkmenfastpoint,
mavroforakis2017modeling} and the benefits of learning normalizing flows
\cite{kobyzev2019normalizing}.

To test our new method for density estimation---that is, training a properly
normalized ANN---we generated a trajectory from the unifilar hidden semi-Markov
model shown in Fig. \ref{fig:inf_BSI} (left) and used BIC to select the correct
model. As BIC is a penalty for a larger number of parameters minus a log likelihood, a smaller BIC suggests a higher posterior probability. With very
little data, the two-state model shown in Fig. \ref{fig:inf_BSI} is deemed to
be the most likely generator. However, as sample size increases, the correct
four-state model eventually takes precedence. See Fig. \ref{fig:inf_BSI}
(Right). The six-state model was never deemed more likely than a two-state or
four-state model. Note that although this methodology might be extended to
nonunifilar hidden semi-Markov models, unifilarity allowed for easily
computable and unique identification of dwell times with states in Eq.
(\ref{eq:1}).

\begin{figure*}
\centering
\includegraphics[width=0.35\textwidth]{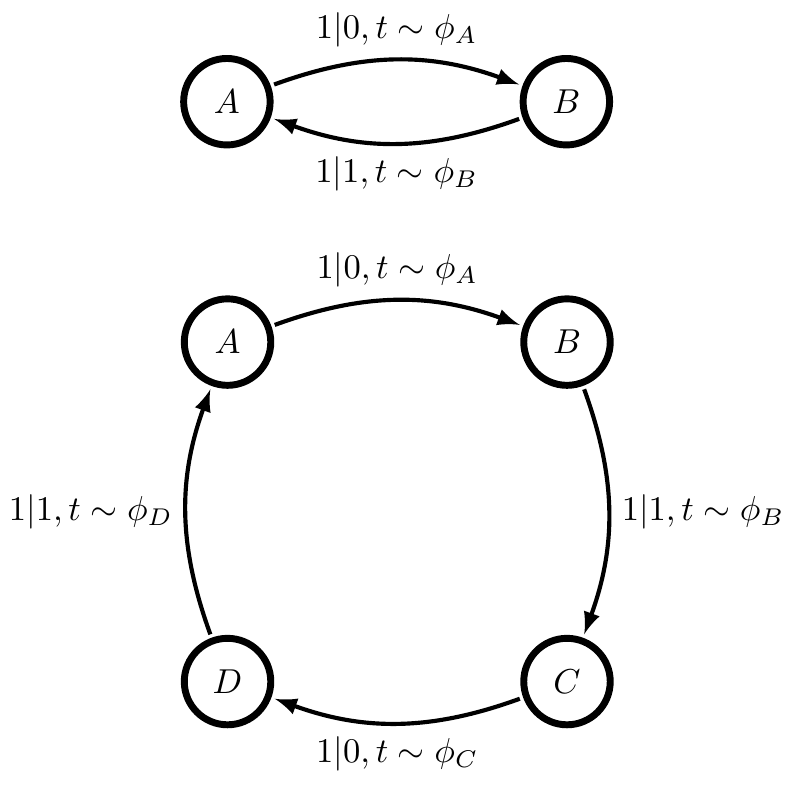}
\includegraphics[width=0.45\textwidth]{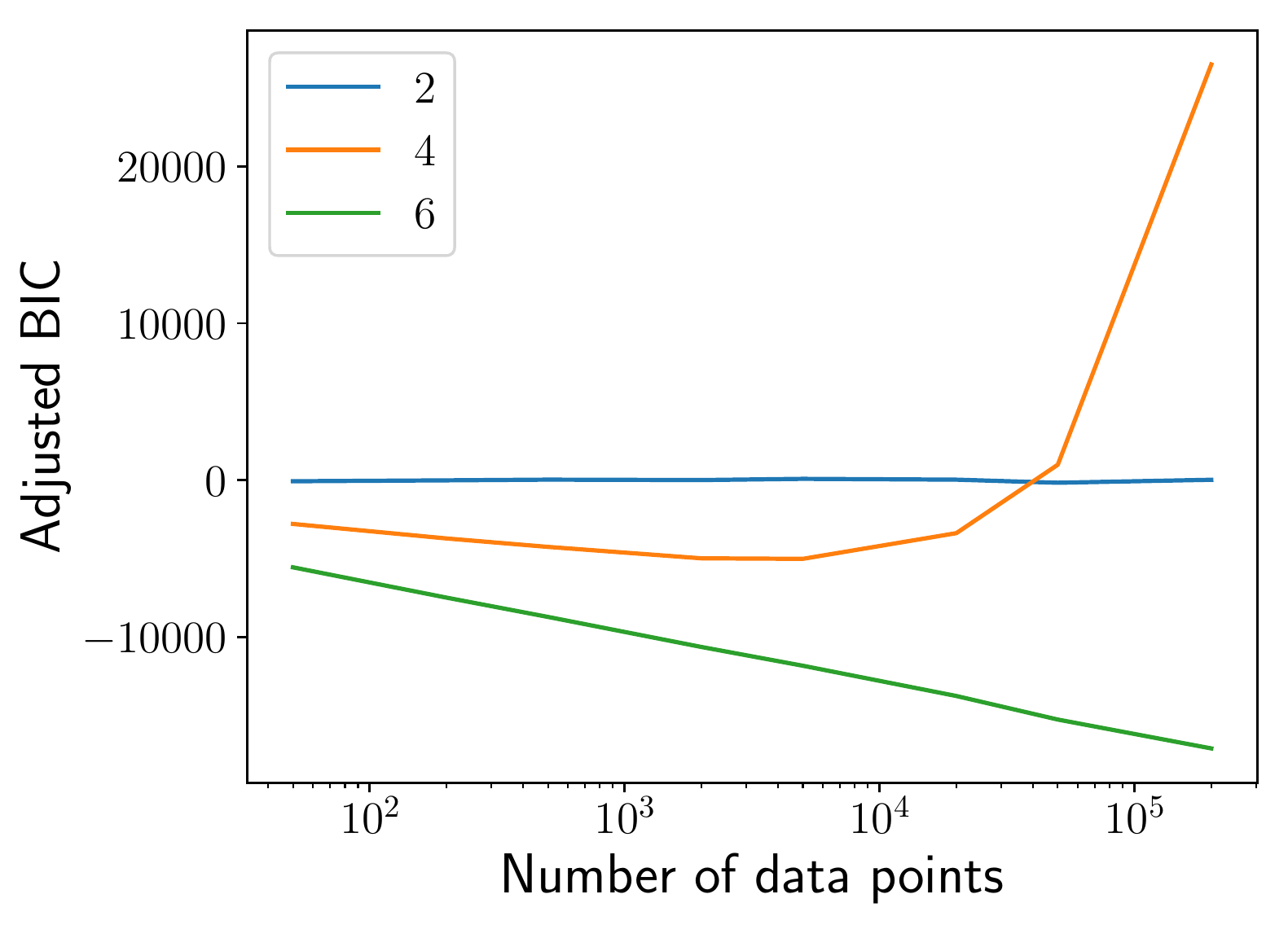}
\caption{\textbf{Model order selection.} (Left) Two-state model (top) and
	four-state uhsMm (bottom) for binary-alphabet, continuous-time data.
	(Right) Adjusted $BIC$, or $-BIC+\left(1.4*N+698*\log N-5.5\right)$, as a
	function of sample size for the two-state, four-state, and six-state uhsMms
	at left. (The six-state uhsMm is not shown.) Adjusted $BIC$ is shown only
	to make it clearer where the four-state machine is deemed more probable
	than the two-state machine. Smaller BIC (higher Adjusted BIC) implies a
	higher posterior probability and so a better fit.
	}
\label{fig:inf_BSI}
\end{figure*}

\subsection{Improved Differential Entropy Rates}

One benefit of unifilar hidden semi-Markov models is that they directly lead to
explicit formulae for information generation---the differential entropy rate
\cite{marzen2017structure}---for a wide class of infinite causal-state
processes like those generated by uhsMms. Generally, entropy rates measure a
process' inherent randomness \cite{Crut01a} and so they are a fundamental
characteristic. As such, much effort has been invested to develop improved
entropy-rate estimators for complex processes
\cite{egner2004entropy,arnold2001information,nemenman2002entropy,archer2014bayesian}
since they aid in classifying processes \cite{costa2002multiscale}. We now ask
how well one can estimate the entropy rate from finite data for
continuous-time, discrete-event processes. In one sense, this is a subtle
problem: estimating a property of an effectively infinite-state process from
finite data.

Compounding this, infinite-state processes or not, differential entropy rates
are difficult to calculate directly from data, since the usual method
calculates the entropy of trajectories of some length $T$, dividing by $T$ to
get a rate:
\begin{align*}
\hmu = \lim_{T\rightarrow\infty}
  T^{-1} H\left[\overrightarrow{(x,\tau)}_{0:T}\right]
  ~.
\end{align*}
A better estimator, though, is the following \cite{Crut01a}:
\begin{align*}
\hmu = \lim_{T\rightarrow\infty}
  \frac{d}{dT}
  H \left[\overrightarrow{(x,\tau)}_{0:T}\right]
  ~,
\end{align*}
which is the slope of the graph of $H[\overrightarrow{(x,\tau)}_{0:T}]$ versus
$T$.

As the entropy of a mixed random variable of unknown dimension, this entropy
appears difficult to estimate from finite data. To calculate
$H[\overrightarrow{(x,\tau)}_{0:T}]$, we use an insight from Ref.
\cite{victor2002binless} and condition on the number of events $N$:
\begin{align*}
H\left[\overrightarrow{(x,\tau)}_{0:T}\right]
  = H[N] + H[\overrightarrow{(x,\tau)}_{0:T}|N]
  ~.
\end{align*}
We then break the entropy into its discrete and continuous components:
\begin{align*}
H[\overrightarrow{(x,\tau)}^{T}|N=n] = H[x_{0:n}|N=n] + H[\tau_{0:n}|x_{0:n},N=n]
\end{align*}
and use the $k$-nearest-neighbor entropy estimator \cite{kraskov2004estimating}
to estimate $H[\tau_{0:n}|x_{0:n},N=n]$, arbitrarily choosing $k = 3$. (Other
$k$s did not substantially affect results.) We estimate both $H[x_{0:n}|N=n]$
and $H[N]$ using plug-in entropy estimators, as the state space is relatively
well-sampled. We call this estimator \emph{model-free}, in that we need not
infer a state-based model to calculate the estimate.

\begin{figure*}[t]
\centering
\includegraphics[width=0.45\textwidth]{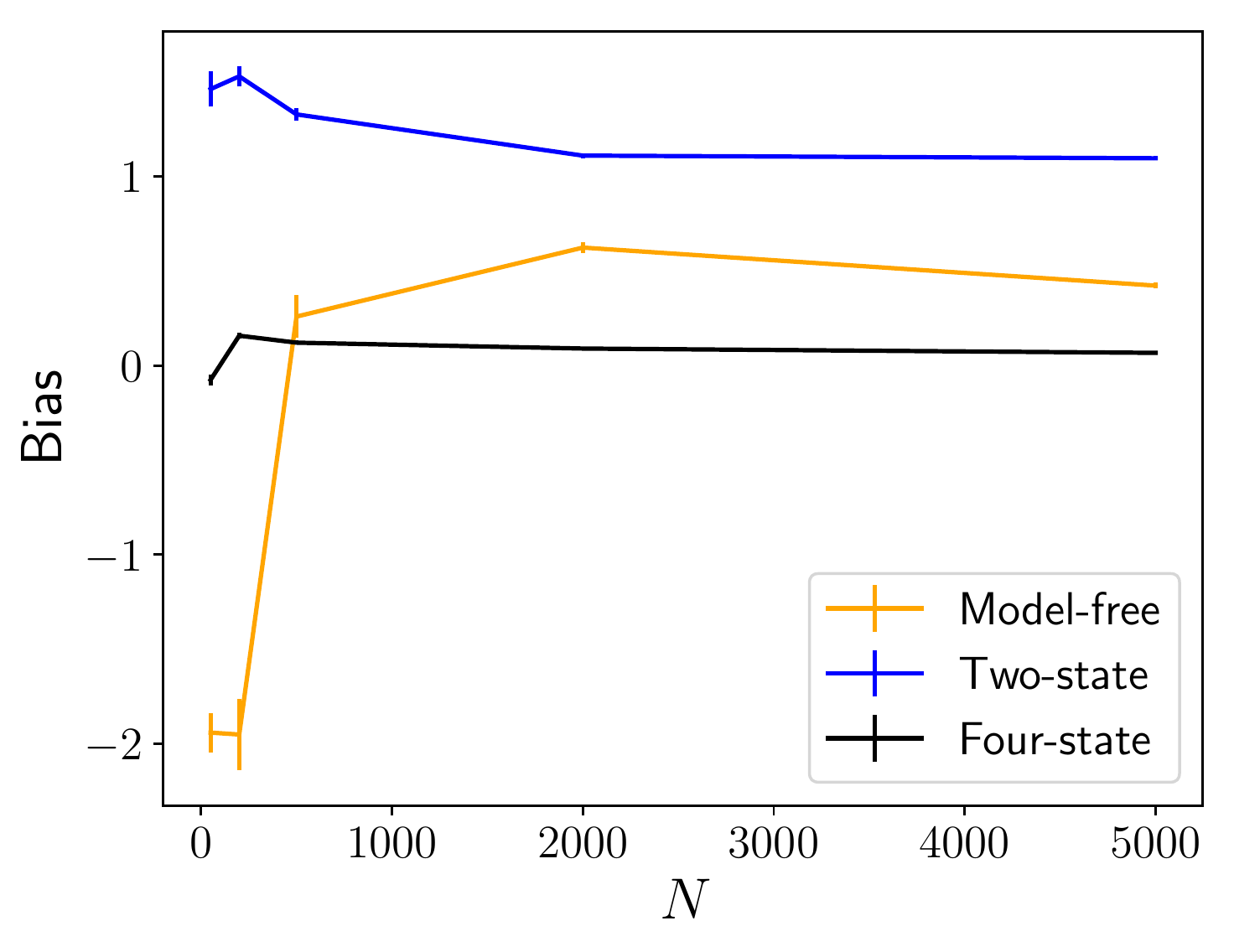}
\includegraphics[width=0.45\textwidth]{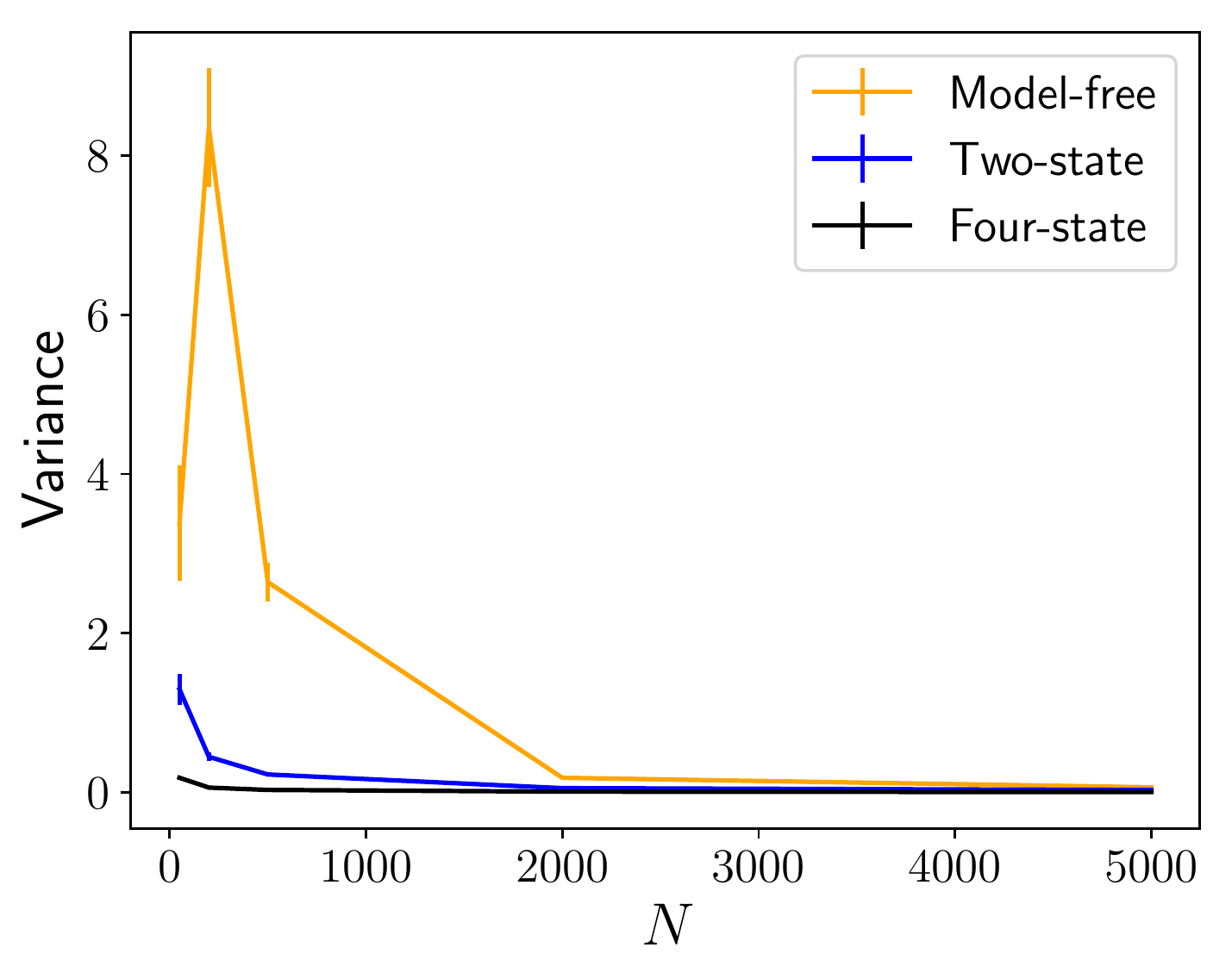}
\caption{\textbf{Model-free versus model-based entropy rate estimators.}
	Synthetic dataset generated from Fig. \ref{fig:inf_BSI}(top) with
	$\phi_A(t) = \phi_D(t)$ as inverse Gaussians with mean $1$ and scale $5$
	and with $\phi_B(t)=\phi_C(t)$ as inverse Gaussians with mean $3$ and scale
	$2$. The ground truth entropy rate from the formula in
	\cite{marzen2017structure} is $1.85$ nats. In orange, the model-free
	estimator (combination of plug-in entropy estimator and kNN
	\cite{kraskov2004estimating} entropy estimators) described in the text. In
	blue, the model-based estimator assuming a two-state model, i.e., the top
	left of Fig. \ref{fig:inf_BSI}. In black, the model-based estimator
	assuming a four-state model, i.e., the bottom left of Fig.
	\ref{fig:inf_BSI}. Lines denote the mean bias (left) or standard deviation
	(right) in entropy rate estimates, and error bars show estimated standard
	deviation in such. The model-free method has much higher bias and variance
	than both model-based methods.
	}
\label{fig:entropyRate}
\end{figure*}

We introduce a model-based estimator, for which we infer a model and then use
the inferred model's differential entropy rate as the differential entropy rate
estimate. To calculate the differential entropy rate from the inferred model,
we use a plug-in estimator based on the formula in
Ref. \cite{marzen2017structure}:
\begin{align}
\widehat{\hmu} = -\sum_s \widehat{p}(s) \int_0^{\infty} \widehat{\mu}_s \widehat{\phi}_s(t) \log \widehat{\phi}_s(t) dt
  ~,
\label{eq:hmu}
\end{align}
where the sum is over the model's internal states. The parameter $\mu_s$ is
simply the mean interevent interval out of state $s$: $\mu_s = \int_0^{\infty}
t\widehat{\phi}_s(t) dt$. We find the distribution $\widehat{p}(s)$ over
internal states $s$ by solving the linear equations \cite{marzen2017structure}:
\begin{align}
p(s) = \sum_{s'} \frac{\mu_{s'}}{\mu_s} \frac{n_{s'\rightarrow s}}{n_{s'}} p(s')
  ~.
\end{align}
We use the MAP estimate of the model as described previously and estimate the
interevent interval density functions $\phi_s(t)$ using a Parzen-window
estimate. The smoothing parameter $h$ was chosen to maximize the
pseudo-likelihoods \cite{marron1987comparison}, given that those proved to have
lower mean-squared error than the neural network density estimation technique
in the previous subsection. In other words, we use neural network density
estimation to choose the model, but with the model in hand, we use
Parzen-window estimates to estimate the density for purposes of estimating
entropy rate. A full mathematical analysis of the bias and variance is beyond
the present scope.

Figure \ref{fig:entropyRate} compares the model-free method ($k$-nearest
neighbor entropy estimator) and the model-based method (estimation using the
inferred model and Eq. (\ref{eq:hmu}) as a function of the length of
trajectories simulated for the model. In Fig. \ref{fig:entropyRate}, the blue
data points describe what happens when the most likely (two-state) model is
used for the model-based plug-in estimator of Eq. (\ref{eq:hmu}). Whereas, the
black data points describe what happens when the correct four-state model is
used for the plug-in estimator. That is, for the two-state model the estimate
given by Eq. (\ref{eq:hmu}) is based on the \emph{wrong model} and, hence,
leads to a systematic overestimate of the entropy rate (nonzero bias) with
unreasonable confidence (low variance). When the correct four-state model is
used for the plug-in estimator in Fig. \ref{fig:entropyRate}, the model-based
estimator has much lower bias \emph{and} variance than the model-free method.

To efficiently estimate the past-future mutual information or \emph{excess
entropy} \cite{Crut01a,Bial01a,Bial00a}, an important companion informational
measure, requires models of the time-reversed process. A sequel will elucidate
the needed retrodictive representations of unifilar hidden semi-Markov models,
which can be determined from the ``forward" unifilar hidden semi-Markov models.
This and the above methods lead to a workable excess entropy estimator.

\subsection{Improved Prediction with Causal States}

A wide array of techniques have been developed for discrete-time prediction, as
described in the introduction. Using dwell times and symbols as inputs to a
recurrent neural network, for example, we can develop continuous-time
techniques that build on these discrete-time techniques. However, we will
demonstrate that we gain a surprising amount by first identifying continuous-time causal
states.



\begin{figure*}[t]
\centering
\includegraphics[width=0.45\textwidth]{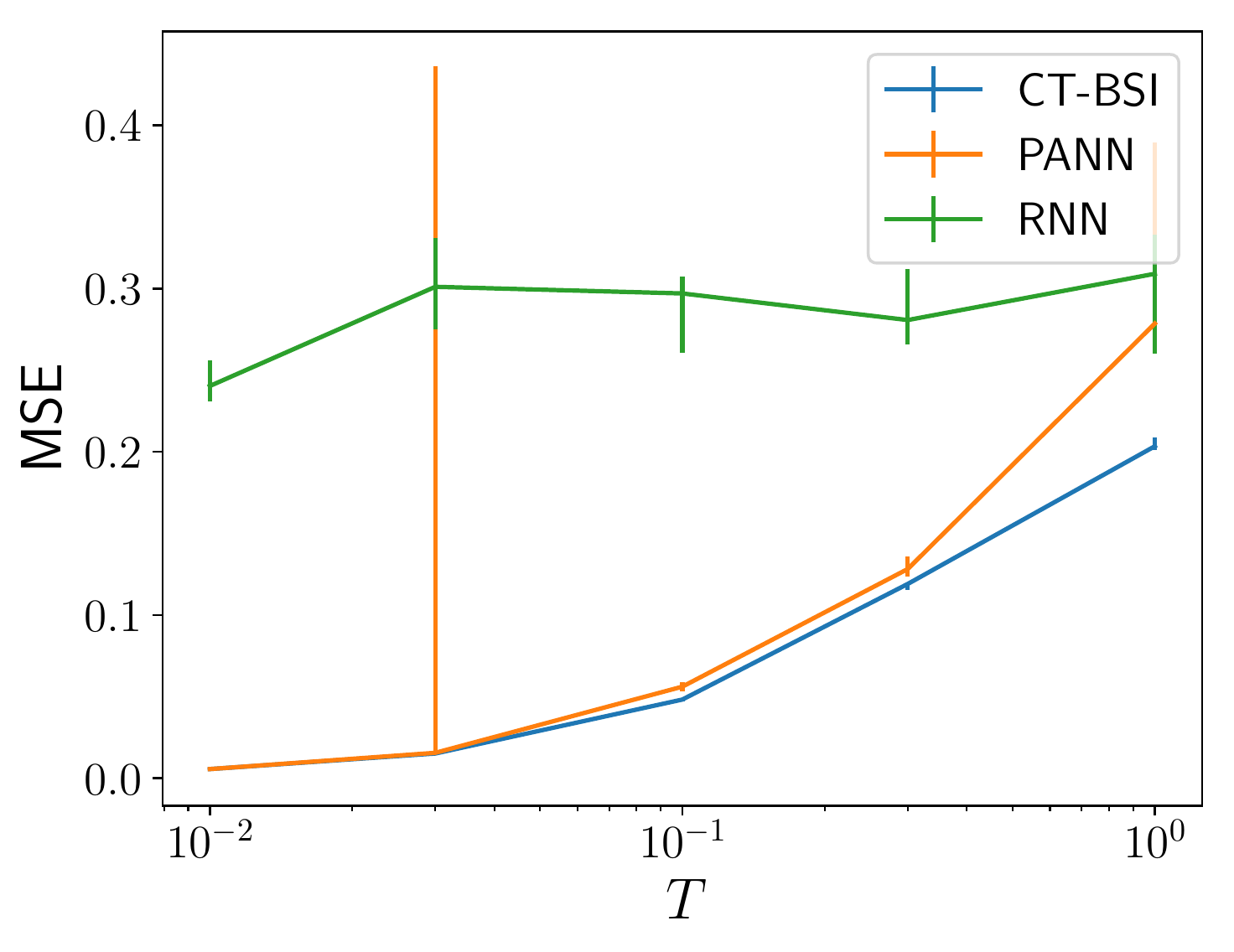}
\includegraphics[width=0.45\textwidth]{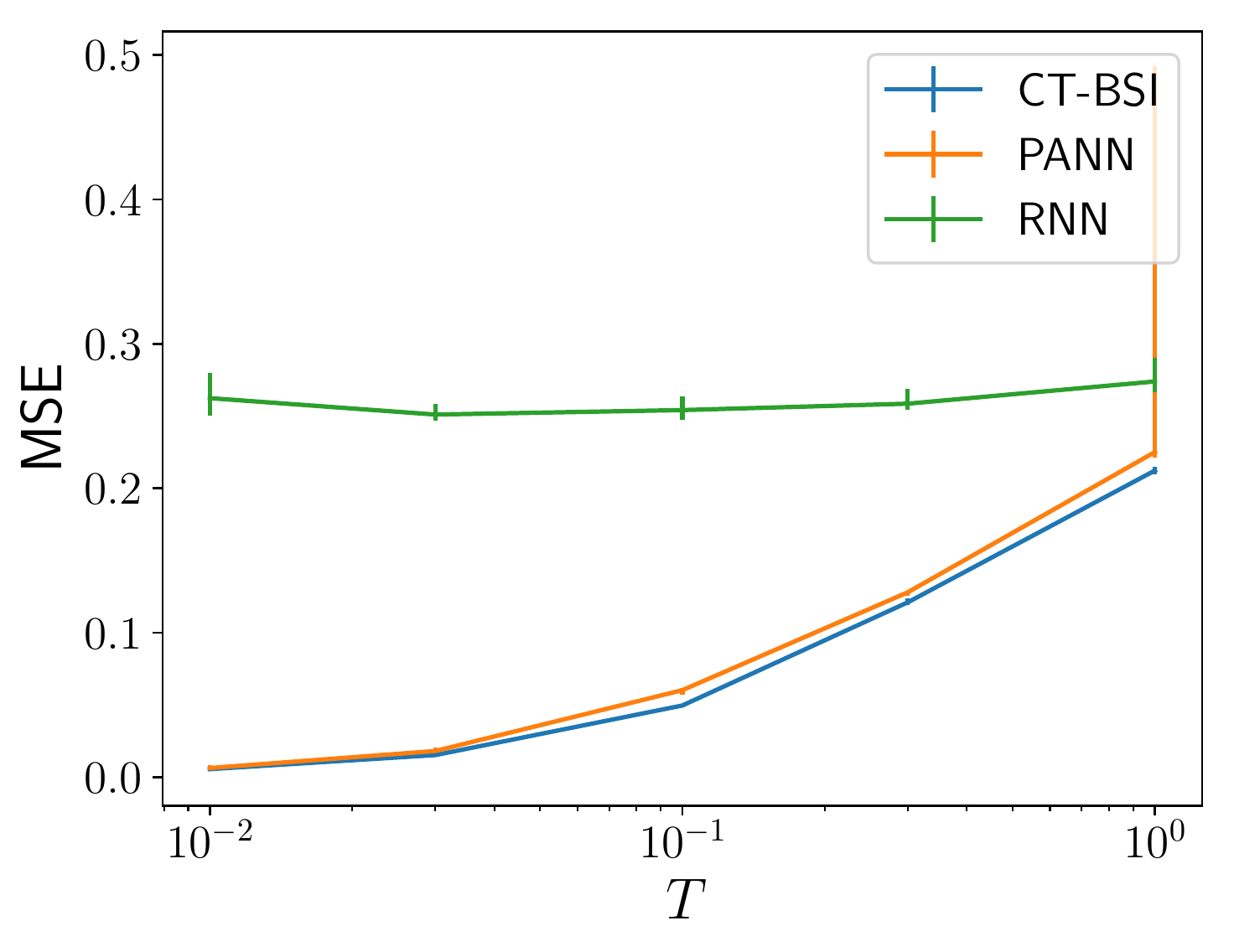}
\caption{\textbf{Prediction.}
	Mean-squared prediction error for the data point a time $T$ away based on training with $500$ (Left) and $5000$ (Right) data
	points. $3000$ epochs were used to train the ANN. $68\%$ confidence
	intervals are shown. The data generating uhsMm is that in Fig.
	\ref{fig:inf_BSI} (Left, bottom). The CT-BSI method infers the internal
	state of the unifilar hidden semi-Markov model; the PANN method uses the
	last $n$ data points $(\ms_i,\tau_i)$ as input into a feedforward neural
	network; and the RNN method uses the past $(\ms_i,\tau_i)$ as input to an 
	LSTM.
	}
\label{fig:prediction}
\end{figure*}

The first prediction method we call \emph{predictive ANN} (PANN) (risking
confusion with the ANN method for density estimation described earlier) takes
as input $(\ms_{-n+1},\tau_{-n+1}),\ldots,(\ms_0,\tau_0^+)$ into a feedforward
neural network that is relatively shallow (six layers) and somewhat thin ($25$
nodes). (Other network architectures were tried with little improvement.) The
network weights are trained to predict the emitted value $\ms$ at time $T$
later based on a mean-squared error loss function. For this to work, the neural
network must predict the hidden state $g$ from the observed data. This can be
accomplished if the dwell-time distributions of the various states are
dissimilar. Increases in $n$ can increase the network's ability to correctly
predict its hidden state and thus predict future symbols. This assumes
sufficient data to avoid overfitting; here, $n$ is chosen via cross-validation.

The second method, called \emph{RNN}, takes $(\ms_{-n+1}, \tau_{-n+1}), \ldots,
(\ms_0,\tau_0^+)$ as input to a \emph{long short-term memory} (LSTM) neural
network \cite{schmidhuber1997long,collins2016capacity}. (Though any recurrent
neural network could have been chosen.) $n$ was chosen by cross-validation. The
LSTM is tasked to produce an estimate of $\ms$ at time $T$ subject to a
mean-squared error loss function, similar to the PANN method.

For both PANN and RNN, a learning rate was chosen an order of magnitude
smaller than the learning rate that led to instability. In fact, a large number
of learning rates that were orders of magnitude smaller than the critical
learning rate were tried.

The third method is our \emph{Continuous-Time Bayesian Structure Inference}
algorithm, labeled CT-BSI. It preprocesses input data using an inferred
unifilar hidden semi-Markov model so that each time step is associated with a
hidden state $g$, a time since last symbol change $\tau_0^+$, and a current
emitted symbol $\ms_0$. In discrete-time applications, there is an explicit
formula for the optimal predictor in terms of the \eM's labeled transition matrix.
However, for continuous-time applications, there is no closed-form expression, and so we use a $k$-nearest neighbor estimate of the data a time $T$ into the future. More precisely, we find the $k$ closest data points in the training
data to the data point at present, and estimate
$\ms_{T}$ as the average of the future data points in the training set. In the
limit of infinite data in which the correct model is identified, for
correctly-chosen $k$, this method outputs an optimal predictor. We choose $k$
via cross-validation.

The synthetic dataset is generated from Fig. \ref{fig:inf_BSI} (Left, bottom) with $\phi_A(t) = \phi_D(t)$ as inverse Gaussians with mean $1$ and scale $5$ and with $\phi_B(t)=\phi_C(t)$ as inverse Gaussians with mean $3$ and scale $2$.
We chose these means and scales so that it would be easier, in principle, for
the non-uhsMm methods (i.e., PANN and RNN) to implicitly infer the hidden state
($A$, $B$, $C$, and $D$). Given the difference in dwell time distributions for
each of the hidden states, such implicit inference is necessary for accurate
predictions.

Figure \ref{fig:prediction} demonstrates that CT-BSI outperforms the
feedforward neural network (PANN) and the recurrent neural network (RNN). The
corresponding mean-squared errors for the three methods are shown there for two
different dataset sizes. Different network architectures, learning rates, and
number of epochs were tried; the results shown are typical. We employed a
$k$-nearest neighbor estimate on the causal states (i.e., the uhsMm's internal
state) to predict the future symbol. Overall, CT-BSI requires little
hyperparameter tuning and outperforms substantially more compute-intensive
feedforward (PANN) and recurrent neural network (RNN) algorithms.

The key here is trainability: It is difficult to train RNNs to predict these
sequences, even though RNNs are intrinsically more expressive than PANNs.  As
such, they perform measurably worse.  PANNs work quite well, but as shown in
Fig. \ref{fig:prediction} (Left), with small amounts of data, PANNs can
sporadically learn wildly incorrect mappings to future data. This occurs at
intermediate timescales: See the the marked increase in the size of the
confidence interval at $T = 2 \times 10^{-2}$ in Fig. \ref{fig:prediction}
(Left). However, this also occurs at long timescales with larger data sets: See
the large increase in mean MSE from the superior performance of CT-BSI at $T =
10^0$ in Fig.  \ref{fig:prediction} (Right). CT-BSI, in contrast, learns low
variance predictions with lower MSE than both RNNs and PANNs.





\section{Discussion}
\label{sec:Discussion}

We introduced the Continuous-Time Bayesian Structure Inference (CT-BSI)
algorithm to infer the causal states \cite{Shal98a} of continuous-time,
discrete-event processes, showing that it outperforms suitably generalized
neural network architectures. This leveraged prior groundwork on discrete-time,
discrete-event processes \cite{marzen2017structure} and Bayesian Structural
Inference for processes generated by finite-state HMMs
\cite{PhysRevE.89.042119}. This led to a natural new entropy-rate estimator
that uses a process' causal states and a new predictor based on causal states
that is more accurate and less compute-intensive than competitors. Finally, and
key to applications, compared to the neural network competitors CT-BSI's
inferred causal states and \eM\ give an explicit and interpretable mechanism
for a process' generator.


The major challenge with applying these tools is model mismatch---the true or a
closely-related model might not be inferred. This can lead to inaccurate
estimations of the entropy rate and also to inaccurate predictions. However, as
discussed, if sufficient data is available, a more complex model will be
favored, which might be closer to ground truth. Additionally, we conjecture
that the processes generated by unifilar hidden semi-Markov models are dense in
the space of all possible stationary continuous-time, discrete-event processes.
If true, the restriction to unifilar models is not a severe limitation, as
there will always be nearby unifilar model with which to estimate and predict.
A second issue---which also plagues the discrete-time, discrete-event Bayesian
structural inference algorithm \cite{PhysRevE.89.042119}---is searching over
all possible topologies of unifilar hidden semi-Markov models \cite{John10a}.
Circumventing both of these challenges suggests exploring nonparametric
Bayesian approaches \cite{pfau2010probabilistic}.

The new inference, estimation, and prediction algorithms can be used to analyze
continuous-time, discrete-event processes---a broad class spanning from seismic
time series to animal behavior---leading to reliable estimates of the intrinsic
randomness of such complex infinite-memory processes. Future efforts will delve
into improved estimators for other time series information measures
\cite{Jame11a}, using model selection criteria more accurate than BIC to
identify MAP models, and into enumerating the topology of all possible uhsMm
models for nonbinary alphabets \cite{John10a}.


\acknowledgments

This material is based upon work supported by, or in part by, the U. S. Army
Research Laboratory and the U. S. Army Research Office under contract
W911NF-13-1-0390 and grant W911NF-18-1-0028, the U.S. Department of Energy
under grant DE-SC0017324, and the Moore Foundation.

\bibliography{chaos}

\end{document}